\newif\ifSC
\newcommand{\degradationTime}{T_\mathrm{d}}
\newcommand{\ind}[1]{\mathbbm{1}\left(#1 \right)}
\newcommand{\rthree}{\mathbb{R}^3}
\newcommand{\pH}[1]{\mathrm{p_H}\left(#1\right)}
\newcommand{\fH}[1]{f_\mathrm{H}\left(#1\right)}
\newcommand{\ball}[2]{\mathcal{B}\left(#1, #2\right)}
\newcommand{\wt}{\mathbf{B}_t}
\newcommand{\wtmu}{\mathbf{W}_t}
\newcommand{\x}{\mathbf{x}}
\newcommand{\y}{\mathbf{y}}
\newcommand{\vw}{v(t)}
\renewcommand{\vw}{\left|\set{K}_t\right|}
\newcommand{\vdw}{v_d(t)}
\renewcommand{\vdw}{\left|\set{M}_t\right|}
\newcommand{\dd}{\mathrm{d}}
\newcommand{\avg}[1]{\mathbb{E}_{\wt}\left[ #1 \right]}
\newcommand{\avgmu}[1]{\mathbb{E}_{\wtmu}\left[ #1 \right]}
\newcommand{\EDP}{\mathrm{p}_\mathrm{D}(t)}
\newcommand{\expects}[2]{\mathbb{E}_{#1}\left[ #2 \right]}
\newcommand{\pro}[1]{\mathbb{P}\left[ #1 \right]}
\newcommand{\prob}[1]{\mathbb{P}\left[ #1 \right]}
\newcommand{\vt}{\avg{\vw}}
\newcommand{\vdt}{\avg{\vdw}}
\renewcommand{\vdt}{\avgmu{\vdw}}
\newcommand{\ie}{{\em i.e., }}
\newcommand{\set}[1]{\mathsf{#1}}
\renewcommand{\Xi}{\set{K}_t}
\newcommand{\tc}{t_\mathrm{c}}
\newcommand{\Ximu}{\set{M}_{t}}
\newcommand{\dist}[1]{\|#1\|}
\newcommand{\erf}{\mathrm{erf}}
\newcommand{\erfc}{\mathrm{erfc}}
\newcommand{\subsectioninline}[1]{\noindent\textbf{#1}:\ }
\newcommand{\difs}{\mathrm{ds}}
\newcommand{\norm}[1]{\left\lVert#1\right\rVert}
\newtheorem{theorem}{Theorem}
\newtheorem{lemma}{Lemma}
\newtheorem{coro}{Corollary}
\newtheorem{remark}{Remark}
\begin{document}

\title{\huge Detection Probability in a Molecular Communication via Diffusion System with Multiple Fully-absorbing Receivers}

\author{Nithin V. Sabu and Abhishek K. Gupta
\thanks{The authors are with the department of Electrical Engineering, Indian Institute of Technology Kanpur, Kanpur, India 208016. (Email: nithinvs@iitk.ac.in and gkrabhi@iitk.ac.in). This research was supported by the  Science and Engineering Research Board   (India) under the grant SRG/2019/001459 and IITK  under the grant IITK/2017/157.}
\vspace{-0.1in}}

\maketitle

\begin{abstract}
In this letter, we consider a 3D molecular communication via diffusion system (MCvDS) with a single point transmitter and multiple fully-absorbing spherical receivers whose centers are distributed as a Poisson point process (PPP) in the medium. We derive the probability that a transmitted molecule hits any of the receivers within time $ t $. We consider both degradable and non-degradable molecules. We verify the analysis using particle-based simulation. The framework can be used for various applications, e.g., to derive event detection probability for  systems where the IMs are transmitted to convey the occurrence of a particular event to trigger reactions at receivers or can be used as channel models for such systems.
\end{abstract}

\begin{IEEEkeywords}
Molecular communication via diffusion, stochastic geometry, multiple fully-absorbing receivers
\end{IEEEkeywords}

\section{Introduction}
In an MCvDS, molecules  carrying information from transmitter bio-nanomachine (TBN) to receiver  bio-nanomachine (RBN), propagate  in the medium via diffusion \cite{Nakano2011}. These molecules are termed as \textit{information molecules} (IMs). The information that needs to be conveyed can be a bit stream, or intimation of a particular event occurring at  TBN, or a control to trigger a reaction at RBN. The RBN usually consists of receptors that bind with IMs to detect the transmission and decode the information.   
 
 MCvDSs with single/multiple TBNs and single RBN have been studied in the past literature. The channel for an MCvDS with a point transmitter and a spherical fully-absorbing (FA) receiver was derived in \cite{Yilmaz2014}. A fully-absorbing receiver is the one that absorbs all the IMs hitting its surface and decodes based on the count of absorbed molecules. The channel  for a similar system, but with degradable IMs,  was discussed in \cite{Heren2015}.
Due to the limited capabilities of bio-nanomachines (BNs), multiple RBNs should co-operate to perform complex tasks or to improve the reception. This poses the requirement of channel models involving multiple FA-RBNs. A 1D two-receiver system was studied in \cite{Guo2016}, where the authors have derived the  expression for the fraction of molecules absorbed by the two FA-receivers. The fraction of molecules absorbed by each of the absorbing receiver located in a 1D two-receiver system was derived in \cite{Huang2019}. The fraction of molecules absorbed by each receiver in a 3D two-receiver system was derived in \cite{Kwack2018}. The hitting probability of an IM at each FA receiver in a 3D two-receiver system was studied via simulations in \cite{Lu2016}. For a system with multiple absorbing receivers, most works in the literature used empirical formula for channel response that is obtained using data fitting methods \cite{Bao2019,Koo2016,Damrath2017}. 
 A 3D MCvDS with multiple FA-RBNs with and without degradable IMs has not been studied analytically in the past, which is the focus of this letter.

In this work, we develop an analytical framework for an MCvDS with a single TBN and multiple FA-RBNs with and without degradable IMs, using tools from stochastic geometry \cite{Andrews2016a}. Each RBN is spherical, with its center located uniformly in the medium. Hence, RBNs can be modeled as a Boolean Poisson process. We present the probability that an IM emitted by a point TBN hits \textit{any one} of the RBNs within time $t$. Characterizing the hitting probability for each RBN is difficult and out of the scope of this paper.
 The presented hitting probability gives the probability that at least one  RBN detects the transmission and trigger a specific reaction. Based on the hitting probability, we derive event detection probability which is the probability that at least one out of $ N $ transmitted IMs are absorbed by any of the RBNs. We consider an example to show the applicability of the proposed framework for systems that perform particular tasks when the IMs absorbed jointly by multiple RBNs cross a threshold value.
The proposed model can serve as a channel model for an MCvDS with single TBN and multiple RBNs. The proposed model can also be applied to a system where the TBN transmits IMs to convey the information that a particular event has occurred to trigger some reaction at RBNs. The model is suitable for applications where the RBNs are deployed to detect a single event such as detection of cancerous cell based on the bio-markers emitted by it \cite{felicetti}, or detection of toxic gas in the environment. After detection, RBN can trigger some defensive actions or  communicate this information to a central node depending on the implementation. Event detection probability derived in this letter can be used in the  design of such a system, e.g., to compute optimal deployment of RBNs that maximizes the detection. The analysis can also be extended to study systems with ISI.\\
\textbf{Notation}: $ \ball{\x}{a} $ represents a ball of radius $a$ centered at the location $\x$. $\set{A}\oplus\set{B}$ represents the Minkowski sum of the two sets $\set{A}$ and $\set{B}$. $|\set{A}|$ is the volume of $\set{A}$. $\phi$ denotes null set.
\begin{figure}
	\centering
	\includegraphics[width=0.7\linewidth]{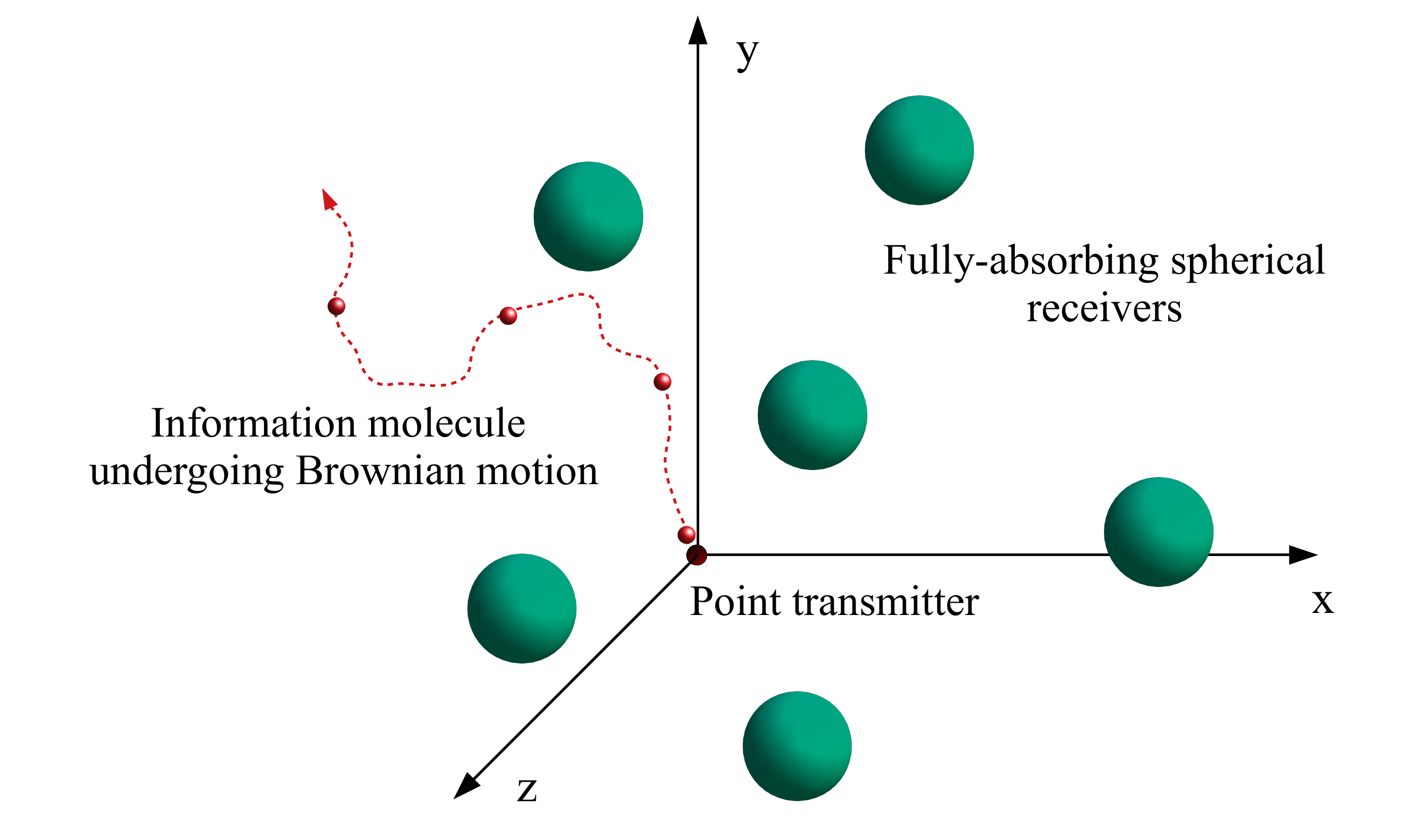}
	\caption{A MCvDS with a point TBN at the origin and multiple RBNs.}
	\label{fig:sm}\vspace{-.15in}
\end{figure}
\section{System Model}

In this letter, we consider an MCvDS in a 3D medium without flow, as shown in Fig. \ref{fig:sm}. The system consists of a point TBN and multiple RBNs, which are fully-absorbing spherical receivers of radius $a$. {\color{black}We can also easily extend the system model with RBNs of different radii by considering superposition of PPPs each having RBNs of different radius.} TBN emits IMs to the medium, which propagates through the medium via Brownian motion. The IMs are detected at an RBN when they hit the surface of a particular receiver.

\subsectioninline{Network Model}
Without loss of generality, we assume that  the TBN is located at the origin. RBNs are modeled as a  Boolean Poisson process $\Psi$, where the centers of RBNs are distributed as a uniform Poisson point process (PPP) $\Phi=\{\x_n:n\in\mathbf{N}\}$ in the region $\set{S}=\rthree\setminus\ball{0}{a}$ with constant density $\lambda>0$. Each RBN is a fully absorbing spherical receiver with $i$th RBN modeled as $\ball{\x_i}{a}$. Therefore,
\begin{align*}
\Psi=\bigcup_{\x_n\in\Phi}\x_n+\ball{0}{a}.
\end{align*}
Here, the region $\set{S}$  ensures that the TBN does not lie inside any RBN.  
We assume that, at time $ t=0 $ the TBN emits $N$ IMs to the propagation medium to communicate the occurrence of an event to the RBNs. These IMs propagate in the medium. After reaching near an RBN, the receptors present in the surface of RBN bind with the IMs. The transmit bit can be estimated based on the detection of at least one IM at any of the RBN.

\subsectioninline{Propagation Model}
Each IM propagates in the medium via Brownian motion independent of the other IMs. The diffusion coefficient of the IM in the propagation medium is $D$.
Let $\wt=\{\y_s,\ 0\leq s \leq t\}$ be the locus of points visited by an IM during the interval $[0,t]$. Here $\y_s$ is the location of the IM at time $s$. $\y_s$  is known as the Weiner process, which is a random process. $\wt$ is termed as the Brownian path.

\subsectioninline{Molecular Degradation}
Due to the presence of other molecules (either naturally present or added intentionally) in the medium, the IM degrades over time as a result of its interaction with them. We consider the first-order degradation with degradation rate constant $\mu$. This means that the probability that an IM does not degrade until time $t$ is $e^{-\mu t}$.
Let $\degradationTime$ is the degradation time of the IM. 
Hence, the probability that  the degradation occurs after time $t$ is
\begin{align}
\prob{\degradationTime>t}=\exp(-\mu t).\label{e14}
\end{align}The path $\wtmu$ of a degradable IM is   defined as
\begin{align*}
\wtmu=\begin{cases}
\wt & \text{if } t<\degradationTime\\
\mathbf{B}_{\degradationTime}& \text{if } t>\degradationTime
\end{cases}.
\end{align*}  
\section{Hitting Probability of an IM}
In this section, we consider the motion of only one IM and study the probability that a particular IM hits the surface of any of the RBNs within time $t$. We term this probability as the hitting probability $\pH{t}$ of this IM. Let $T$ denote the hitting time \ie the time at which the IM hits the surface of any RBN. 
$$\pH{t}=\prob{T\le t}.$$
 \subsection{Hitting probability of IM when there is no degradation}
 \begin{figure}
 	\centering
 	\includegraphics[width=0.7\linewidth,trim=0 0 0 0,clip]{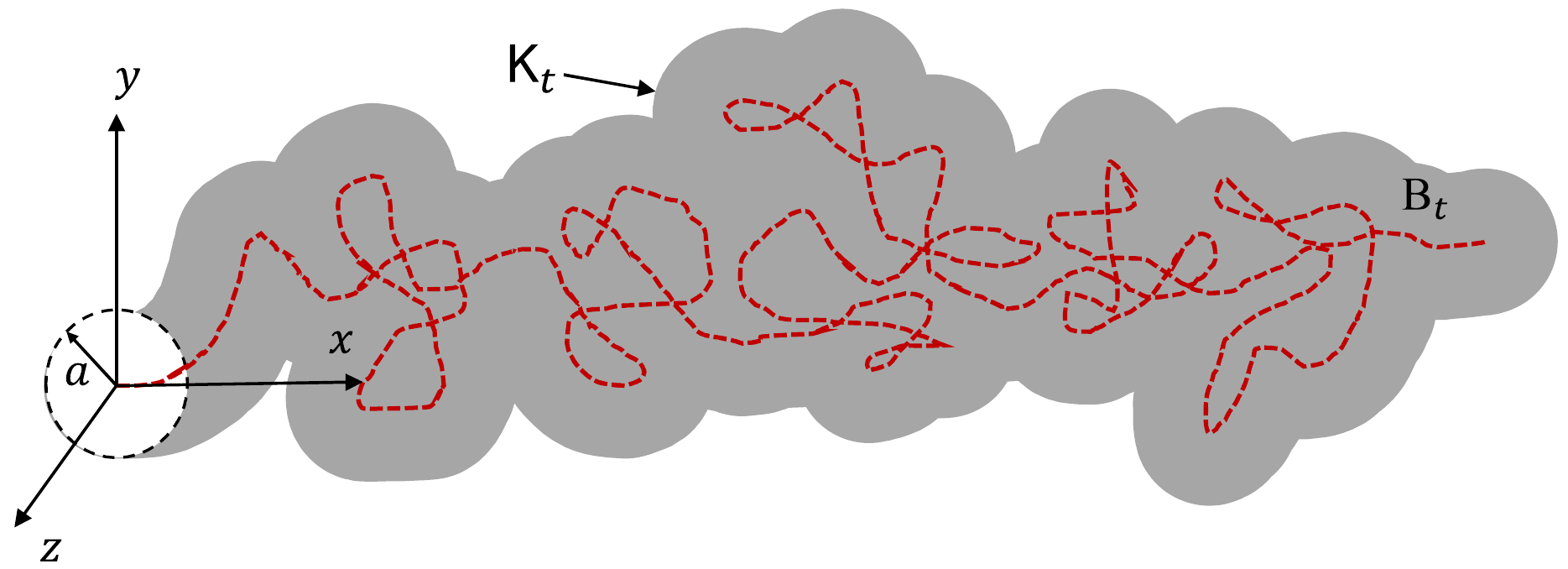}
 	\caption{The shaded part represents the region $\Xi$. $\wt$ represents a realization of the Brownian path traced by an IM, which is emitted by the TBN. \vspace{-.2in}}
 	\label{fig:volume}
 \end{figure}
 We first consider the case when IM is not degradable \ie $\mu=0$.  Recall that $\wt$ denote the Brownian path of the IM. The event $\set{E}_t$ that the IM hits the surface of a RBN located at $\x_i$ (\ie $\ball{\x_i}{a}$) within time $t$ is equivalent to the event 
\begin{align*}
\set{E}_t&=\{\wt \cap \ball{\x_i}{a}\ne \phi\}=\cup_{s\le t}\{\y_s\cap \ball{\x_i}{a}\ne\phi\}\\
&=\cup_{s\le t}\{\dist{\y_s-\x_i}\leq a\}=\cup_{s\le t}\{\x_i\cap \ball{\y_s}{a}\ne\phi\}\\
&=\left\{\x_i \cap \left(\cup_{s\le t}\ball{\y_s}{a}\right)\ne\phi\right\}\\
&=\left\{\x_i\cap \left(\wt\oplus \ball{0}{a}\right)\ne\phi\right\}\\
&=\left\{\x_i\in \left(\wt\oplus \ball{0}{a}\right)\right\}.
\end{align*}
The event of IM hitting on the surface of any RBN in $\Psi$ is equivalent to the event that at least one point $\x_i$ in $\Phi$  lies inside the region $\left(\wt\oplus\ball{0}{a}\right)$. Since all $\x_i$s lie outside $\ball{0}{a}$, it is equivalent to the event that at least one point $\x_i$ in $\Phi$  lies inside the region $\Xi=\left(\wt\oplus\ball{0}{a}\right)\setminus \ball{0}{a}$.  The shaded region in Fig. \ref{fig:volume} represents $\Xi$. Hence, hitting probability of the IM  is equal to
 \begin{align}
\pH{t}&=\prob{\set{E}_t}=1-\avg{\prob{\x_n\not\in \Xi,\ \forall\ \x_n\in\Phi}}\nonumber\\
&=1-\avg{\exp\left(-\lambda\vw\right)},\label{e3}
\end{align}
where the last step is due to void probability of PPP \cite{Andrews2016a}.  $\vw$ is a function of the Brownian path $\wt$ and hence is a random variable. Note that \eqref{e3} gives the exact hitting probability for  FA RBNs and is the same as the death probability of a Brownian particle in the presence of multiple traps derived in \cite{Berezhkovskii1991a}. As \cite{Berezhkovskii1991a}, applying cumulant expansion \cite{kenney1951cumulants} in \eqref{e3} gives
 \begin{align}
\pH{t}=1-\exp\left(\sum\nolimits_{n=1}^{\infty}\kappa_n{\left(-\lambda\right)^n}/{n!}\right),\label{e4}
\end{align}
 where $\kappa_n$ is the $n^{\text{th}}$ cumulant of $\vw$. Recall that, for a random variable, its first cumulant is its mean. The first cumulant \ie the mean of $\vw$ is given in the following Lemma.
 \begin{lemma}\label{lemma1}
 The mean volume of $\vw$ is 
 $$\kappa_1=\vt=4\pi Dat+8a^2\sqrt{\pi D t} .\label{eq:kt}$$
 \end{lemma}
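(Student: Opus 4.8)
The plan is to compute the mean volume by interchanging the expectation over the Brownian path with a spatial integral. Writing the volume as an integral of an indicator, $\vw=\int_{\set{S}}\ind{\x\in\Xi}\,\dd\x$ with $\set{S}=\rthree\setminus\ball{0}{a}$, and applying Tonelli's theorem (the integrand is nonnegative), I would obtain
\[
\vt=\int_{\set{S}}\prob{\x\in\Xi}\,\dd\x .
\]
This reduces the problem to the purely probabilistic task of finding, for a fixed point $\x$, the probability that $\x$ is covered by the region $\Xi$.

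The next step is to identify this coverage probability. Reusing the event equivalences already established for $\set{E}_t$, a fixed point $\x$ lies in $\wt\oplus\ball{0}{a}$ precisely when $\wt\cap\ball{\x}{a}\ne\phi$, i.e.\ when the Brownian path enters the ball $\ball{\x}{a}$ before time $t$. Equivalently, a Brownian motion started at the origin—which sits at distance $\dist{\x}$ from the centre $\x$—hits the absorbing sphere of radius $a$ about $\x$ within time $t$. By the rotational invariance of Brownian motion this probability depends only on $r=\dist{\x}$, and it is exactly the cumulative hitting probability of a single fully-absorbing spherical receiver, namely $\tfrac{a}{r}\erfc\!\big(\tfrac{r-a}{\sqrt{4Dt}}\big)$ from \cite{Yilmaz2014}, valid for $r>a$. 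Since $\Xi$ removes $\ball{0}{a}$, the spatial integral runs over $r>a$, which matches the regime of validity of this formula, so the excluded region requires no separate treatment.

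Finally I would evaluate the integral in spherical coordinates. With $\dd\x=4\pi r^{2}\,\dd r$ this gives
\[
\vt=4\pi a\int_{a}^{\infty} r\,\erfc\!\left(\frac{r-a}{\sqrt{4Dt}}\right)\dd r .
\]
Substituting $u=r-a$ and splitting the linear factor $r=u+a$ produces two elementary integrals, $\int_{0}^{\infty}u\,\erfc(u/\beta)\,\dd u$ and $\int_{0}^{\infty}\erfc(u/\beta)\,\dd u$ with $\beta=\sqrt{4Dt}$. Using the standard values $\int_{0}^{\infty}v\,\erfc(v)\,\dd v=\tfrac14$ and $\int_{0}^{\infty}\erfc(v)\,\dd v=1/\sqrt{\pi}$ (each obtained by a single integration by parts), the first integral evaluates to $Dt$ and the second to $2\sqrt{Dt/\pi}$. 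Collecting terms yields $4\pi a\,(Dt+2a\sqrt{Dt/\pi})=4\pi Dat+8a^{2}\sqrt{\pi Dt}$, which is the claimed expression.

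The main obstacle I anticipate is the second step: rigorously justifying that the point-coverage probability coincides with the single-receiver hitting probability. This hinges on the set identity $\{\x\in\wt\oplus\ball{0}{a}\}=\{\wt\cap\ball{\x}{a}\ne\phi\}$ together with spherical symmetry, which collapses a genuinely three-dimensional first-passage event to a one-dimensional radial problem whose answer is already known. The remaining ingredients—the Tonelli interchange and the two $\erfc$ integrals—are routine, and convergence of the improper integral as $r\to\infty$ is guaranteed by the super-exponential decay of $\erfc$, so the $8a^{2}\sqrt{\pi Dt}$ surface term and the $4\pi Dat$ bulk term are both finite.
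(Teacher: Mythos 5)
Your proposal is correct and follows essentially the same route as the paper's Appendix~A: both reduce $\vt$ to the spatial integral of the single fully-absorbing-receiver hitting probability $\tfrac{a}{\norm{\y}}\erfc\bigl((\norm{\y}-a)/\sqrt{4Dt}\bigr)$ over $\rthree\setminus\ball{0}{a}$ via the indicator/Tonelli interchange and the identity $\{\x\in\wt\oplus\ball{0}{a}\}=\{\wt\cap\ball{\x}{a}\ne\phi\}$. The only difference is cosmetic—you evaluate the radial integral directly by substitution and the two standard $\erfc$ moments, whereas the paper expands $\erfc$ and swaps the order of integration—and your explicit computation correctly reproduces $4\pi Dat+8a^2\sqrt{\pi Dt}$.
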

 \begin{proof}
 See Appendix \ref{app:A} for the proof.
 \end{proof}

Further, the second cumulant of a random variable is its variance, and the third cumulant is its third central mean, which may be challenging to obtain. However, we can approximate $\pH{t}$ in \eqref{e4} by retaining only the first cumulant and ignoring higher-order cumulants. \ie
 \begin{align}
 \pH{t}\approx 1-\exp\left(-\lambda \vt\right)\label{e5}.
 \end{align} 
Note that this is also an upper bound due to Jensen's inequality \cite{Berezhkovskii1991a}. The approximation used to obtain \eqref{e5} is valid when $\lambda \times \frac43\pi a^3 \ll 1$ and $t$ is not very large.
 Now, substituting value of $\vt$ from Lemma \ref{eq:kt} in \eqref{e5}, we get the following Theorem 1 \cite{Berezhkovskii1991a}.
  \begin{theorem}\label{thm1}
  	For a non-degradable IM, the hitting probability of  an IM on any of the RBN in $\Psi$ within time $t$ is given as
  	\begin{align}
  		\pH{t}= 1-\exp\left(-4\pi\lambda Dat-8a^2\lambda\sqrt{\pi D t}\right)\label{e11}.
  	\end{align}
  \end{theorem}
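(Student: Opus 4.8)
The plan is to obtain the claimed closed form by specializing the first-cumulant approximation established in the preceding development, so that the theorem reduces to a single substitution. First I would recall the exact expression \eqref{e3}, in which the hitting probability is written through the void probability of the PPP as $\pH{t}=1-\avg{\exp(-\lambda\vw)}$, with the expectation taken over the Brownian path $\wt$ and $\vw$ denoting the (random) volume of the dilated-path region $\Xi$. The essential reduction is the cumulant expansion \eqref{e4}, which rewrites the path expectation of the exponential as the exponential of a cumulant series in $-\lambda$; retaining only the leading cumulant $\kappa_1=\vt$ and discarding the higher-order terms yields the approximation \eqref{e5}, namely $\pH{t}\approx 1-\exp(-\lambda\vt)$.

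With this in hand, the final step is a direct substitution: I would insert the mean volume supplied by Lemma \ref{lemma1}, $\vt=4\pi Dat+8a^2\sqrt{\pi Dt}$, into \eqref{e5}. Distributing the factor $-\lambda$ across the two terms inside the exponent produces $-4\pi\lambda Dat-8a^2\lambda\sqrt{\pi Dt}$, which is exactly the exponent appearing in \eqref{e11}. No further estimation is required here, since the only approximation---truncation of the cumulant series after the mean---has already been absorbed into \eqref{e5} and is justified in the stated regime $\lambda\cdot\frac43\pi a^3\ll 1$ with $t$ not too large.

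The genuinely nontrivial work is therefore isolated in Lemma \ref{lemma1}, whose proof I would defer to Appendix \ref{app:A}, and this is where I expect the main obstacle to lie. One must compute $\vt=\avg{\vw}$, the expected volume of the Wiener sausage $\wt\oplus\ball{0}{a}$ with the excluded core $\ball{0}{a}$ removed. The natural route is to express this expected swept volume as an integral over $s\in[0,t]$ of the rate at which the dilated path occupies previously unvisited region, which in turn requires controlling the probability that the particle visits fresh territory at each instant. The appearance of both a term linear in $t$ and a term proportional to $\sqrt{t}$ is characteristic of the three-dimensional Wiener sausage, the former being the steady-state (capacity) contribution and the latter the transient correction arising from the finite receiver radius $a$.
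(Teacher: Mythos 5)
Your proposal matches the paper's own derivation of Theorem~\ref{thm1} exactly: the paper likewise obtains \eqref{e11} by truncating the cumulant expansion \eqref{e4} after $\kappa_1$ to get \eqref{e5} and then substituting the mean volume from Lemma~\ref{lemma1}, with all the real work deferred to the lemma's proof in Appendix~\ref{app:A}. (Your speculative sketch of that lemma via a ``fresh territory'' rate differs from the paper's route, which integrates the single-sphere first-passage probability $\frac{a}{\norm{\y}}\erfc\bigl(\frac{\norm{\y}-a}{\sqrt{4Dt}}\bigr)$ over $\y$, but that concerns the lemma rather than the theorem at hand.)
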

From \eqref{e11}, we can see that the hitting probability is a function of density and radius of RBN, and the diffusion coefficient $D$. We now compute the time constant of the hitting process. The time constant $ t_c $ of a process is defined as the time  it takes for the process to be completed by a factor $k=1-1/e$. It represents the time-scale of the process \ie the order of magnitude of time in which the process occurs.  From its expression, we can understand which parameters affect the rate of the process and under which conditions.
\begin{remark}
	The time-constant for the hitting process (\ie the time it takes for the hitting probability to attain $1-1/e$ of its maximum value) is 
	\begin{align}
	t_\mathrm{c}=\frac{a^2}{\pi D}\left(1-\sqrt{1+\frac{1}{4a^3\lambda}}\right)^2,
	\label{rem1a}
	\end{align}	
    and is in the order of 
	\begin{align}
		t_\mathrm{c}=\mathcal{O}\left(\frac{1}{4\pi\lambda Da}\min\left(1,\frac{1}{16a^3\lambda}\right)\right).
		\label{rem1}
	\end{align}
\end{remark}
\begin{proof}
	See Appendix \ref{app:B}.
\end{proof}

Let $\fH{t}$ be defined as the hitting rate of IMs at the RBNs. In other words, $\fH{t}\Delta t$ denotes the probability that IM hits any RBN in time interval $[t,t+\Delta t]$. It is equal to the derivative of hitting probability of an IM within time $t$ in \eqref{e11}  as 
\begin{align}
\fH{t}&=\left(4\pi\lambda Da+{4a^2\lambda\sqrt{\pi D/t}}\right)\times\exp\left(-4\pi\lambda Dat-8a^2\lambda\sqrt{\pi D t}\right)\ind{t\geq 0}\label{e13}.
\end{align}
\subsection{Hitting probability of degradable IM}
We now consider degradable IM with the degradation process, as described in Section II. We assume that the degradation of IM is independent of the motion of the IM.  
The event of IM hitting on the surface of any RBN in $\Psi$ before its degradation is equivalent to the event that at least one point $\x_i$ in $\Phi$  lies inside the region $\Ximu=\left(\wtmu\oplus\ball{0}{a}\right)\setminus \ball{0}{a}$.  
 \begin{lemma}\label{lemma2}
	For a degradable IM, the average volume of $\Ximu$ is given as $\vdt=$
	\begin{align}
	{4\pi a}{(D/\mu)}\left(1-\exp(-\mu t)\right)+4\pi a^2\sqrt{({D}/{\mu})}\ \erf\left(\sqrt{\mu t}\right)
	\end{align}
\end{lemma}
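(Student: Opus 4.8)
The plan is to reduce everything to Lemma~\ref{lemma1} by observing that degradation merely stops the growth of the Wiener sausage. By the definition of $\wtmu$, the degradable path equals the ordinary Brownian path restricted to $[0,\min(t,\degradationTime)]$, so the degradable sausage obeys $\Ximu=\set{K}_{\min(t,\degradationTime)}$, where $\set{K}_s=(\mathbf{B}_s\oplus\ball{0}{a})\setminus\ball{0}{a}$ is the non-degradable sausage at time $s$. Consequently $\vdw=\big|\set{K}_{\min(t,\degradationTime)}\big|$, and since the degradation event is independent of the motion, I would evaluate the mean by first averaging over the Brownian path for a fixed $\degradationTime$ and then averaging over $\degradationTime$.

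First I would apply Lemma~\ref{lemma1} at a generic time $s$, giving the mean sausage volume $g(s):=4\pi Da\,s+8a^2\sqrt{\pi D s}$. Independence then yields
$$\vdt=\expects{\degradationTime}{g\!\left(\min(t,\degradationTime)\right)}.$$
Using that $\degradationTime$ is exponential of rate $\mu$ with density $\mu e^{-\mu\tau}$ (from \eqref{e14}), I would split the expectation according to whether degradation happens before or after $t$: on $\{\degradationTime<t\}$ the minimum is $\degradationTime$, while $\{\degradationTime>t\}$ occurs with probability $e^{-\mu t}$ and freezes the volume at $g(t)$, giving
$$\vdt=\int_0^t g(\tau)\,\mu e^{-\mu\tau}\,\dd\tau + g(t)\,e^{-\mu t}.$$

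It then remains to evaluate this for the two pieces of $g$. For the linear part $4\pi Da\,\tau$, integration by parts gives $\int_0^t \tau\,\mu e^{-\mu\tau}\,\dd\tau=\tfrac1\mu(1-e^{-\mu t})-t e^{-\mu t}$, and the $\{\degradationTime>t\}$ contribution $4\pi Da\,t\,e^{-\mu t}$ cancels the $-t e^{-\mu t}$ term exactly, leaving $\tfrac{4\pi aD}{\mu}(1-e^{-\mu t})$. For the square-root part $8a^2\sqrt{\pi D}\,\sqrt{\tau}$, integration by parts again removes the frozen boundary term and reduces the integral to $\tfrac12\int_0^t \tau^{-1/2}e^{-\mu\tau}\,\dd\tau$; the substitution $\tau=u^2/\mu$ converts this into a Gaussian integral that produces $\erf(\sqrt{\mu t})$, yielding $4\pi a^2\sqrt{D/\mu}\,\erf(\sqrt{\mu t})$. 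Adding the two contributions gives the stated formula.

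The only step requiring genuine thought is the first one: identifying the stopping-time structure $\Ximu=\set{K}_{\min(t,\degradationTime)}$ and justifying, via independence of degradation and diffusion, that the inner Brownian average can be replaced by the Lemma~\ref{lemma1} expression $g(\cdot)$ before integrating over $\degradationTime$. Once this reduction is in place, the remainder is two elementary integrations, the only subtlety being the emergence of the error function from the Gaussian substitution.
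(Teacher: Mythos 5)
Your proposal is correct and follows exactly the paper's route: the paper's proof is precisely the decomposition $\Ximu=\Xi\ind{t<\degradationTime}+\set{K}_{\degradationTime}\ind{t>\degradationTime}$ (your $\set{K}_{\min(t,\degradationTime)}$), followed by averaging Lemma~\ref{lemma1} against the exponential density of $\degradationTime$. Your write-up merely carries out in full the two integrations that the paper leaves implicit, and the computations check out.
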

\begin{proof}
Note that $\Ximu=\Xi\ind{t<\degradationTime}+\set{K}_{\degradationTime}\ind{t>\degradationTime}$. Using Lemma \ref{lemma1} and the PDF of $ \degradationTime$, we get the desired result.
\end{proof}
Note that, $\vdt\leq \vt$. 
Hence, degradation limits the  volume growth of $\Ximu$ with time.
The hitting probability  of the IM on any RBN within time $t$ before its degradation is 
\begin{align}
\pH{\mu,t}=\int_{0}^{t}\fH{s} \prob{\degradationTime>s}\difs \label{e16}.
\end{align}
Substituting the value of $ \fH{t} $ from \eqref{e13} and $\prob{\degradationTime>t}$ from \eqref{e14} in \eqref{e16} gives Theorem 2.
\begin{theorem}\label{thm2}
For an IM with degradation, the hitting probability of an IM at any one of the RBNs within time $t$ is
	\begin{align}
	\pH{\mu,t}&=\left(1-\exp\left(-2\beta\sqrt{\alpha t}-\alpha t\right)\right)\times\left(1-\frac{\mu}{\alpha}\right)+\frac{\sqrt{\pi}\beta \mu}{\alpha}e^{\beta^2}\left(\erf\left(\beta+\sqrt{\alpha t}\right)-\erf\left(\beta\right)\right),\label{eq:e17}
	\end{align}
	where $\erf(.)$ is the error function, $\alpha=4a\pi \lambda D+\mu$ and $\beta=4a^2\lambda \sqrt{\frac{\pi D}{\alpha}}$.
\end{theorem}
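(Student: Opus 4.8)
The plan is to evaluate the integral in \eqref{e16} in closed form. First I would substitute $\fH{s}$ from \eqref{e13} and $\prob{\degradationTime>s}=\exp(-\mu s)$ from \eqref{e14}, so that
\begin{align*}
\pH{\mu,t}=\int_0^t\left(4\pi\lambda Da+4a^2\lambda\sqrt{\pi D/s}\right)\exp\left(-4\pi\lambda Das-8a^2\lambda\sqrt{\pi Ds}-\mu s\right)\difs.
\end{align*}
The key first simplification is to absorb the degradation factor $e^{-\mu s}$ into the linear-in-$s$ absorption term: since $\alpha=4a\pi\lambda D+\mu$, the two combine into $-\alpha s$. Using $\beta=4a^2\lambda\sqrt{\pi D/\alpha}$, I would then note that $8a^2\lambda\sqrt{\pi D}=2\beta\sqrt{\alpha}$ and $4a^2\lambda\sqrt{\pi D}=\beta\sqrt{\alpha}$, so the exponent becomes exactly $-\alpha s-2\beta\sqrt{\alpha s}$ while the prefactor reads $(\alpha-\mu)+\beta\sqrt{\alpha/s}$.

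Next I would apply the substitution $u=\sqrt{\alpha s}$, which turns the integral into
\begin{align*}
\pH{\mu,t}=\int_0^{\sqrt{\alpha t}}\left(\tfrac{2(\alpha-\mu)}{\alpha}\,u+2\beta\right)e^{-u^2-2\beta u}\,\dif u,
\end{align*}
and complete the square via $-u^2-2\beta u=-(u+\beta)^2+\beta^2$, which recasts the exponential as $e^{\beta^2}e^{-(u+\beta)^2}$. The $2\beta$ term then yields a pure Gaussian integral: shifting $w=u+\beta$ and invoking $\erf(x)=\frac{2}{\sqrt\pi}\int_0^x e^{-w^2}\,\dif w$ produces $\sqrt{\pi}\beta\,e^{\beta^2}\left(\erf(\beta+\sqrt{\alpha t})-\erf(\beta)\right)$. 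For the remaining $u$ term, I would use the identity $\frac{\dif}{\dif u}\!\left[-\tfrac12 e^{-u^2-2\beta u}\right]=(u+\beta)e^{-u^2-2\beta u}$ to split $\int u\,e^{-u^2-2\beta u}\,\dif u$ into an elementary exponential piece plus a second error-function piece.

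The final step is the bookkeeping that collapses everything to the stated form. The elementary piece evaluates to $\tfrac{\alpha-\mu}{\alpha}\left(1-e^{-2\beta\sqrt{\alpha t}-\alpha t}\right)$, giving the first term $\left(1-\tfrac{\mu}{\alpha}\right)\left(1-e^{-2\beta\sqrt{\alpha t}-\alpha t}\right)$. The two error-function contributions---namely $-\tfrac{\alpha-\mu}{\alpha}$ times the block arising from the $u$ term and $+1$ times the block arising from the $2\beta$ term---partially cancel, leaving a net coefficient of $\tfrac{\mu}{\alpha}$ and hence the second term $\tfrac{\sqrt\pi\beta\mu}{\alpha}e^{\beta^2}\left(\erf(\beta+\sqrt{\alpha t})-\erf(\beta)\right)$. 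I expect this cancellation to be the only delicate point: it is exactly why the closed form carries the clean factor $\mu/\alpha$ rather than the separate $\mu$- and $\alpha$-dependent coefficients one might naively anticipate, and tracking it correctly requires keeping the two Gaussian contributions aligned through the completion of the square and the substitution.
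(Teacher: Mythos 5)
Your proposal is correct and follows exactly the route the paper intends: it substitutes \eqref{e13} and \eqref{e14} into \eqref{e16} and evaluates the resulting integral, which is all the paper's proof says to do. Your substitution $u=\sqrt{\alpha s}$, the completion of the square, and the cancellation leaving the net coefficient $\mu/\alpha$ all check out, so you have simply supplied the integration details the paper leaves implicit.
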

\begin{coro}
	When $t\rightarrow \infty$, \eqref{eq:e17} denotes the probability of the IM reaching any one of the RBN before getting degraded, which is given as
	\begin{align}
	\pH{\mu,\infty}= 1-({\mu}/{\alpha})+{\sqrt{\pi}\beta }{(\mu/\alpha)}e^{\beta^2}\erfc(\beta),\label{e15}
	\end{align}
	where $\alpha$ and $\beta$ are same as in Theorem \ref{thm2}.
\end{coro}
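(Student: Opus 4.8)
The plan is to evaluate the limit $t\to\infty$ of the expression for $\pH{\mu,t}$ from Theorem \ref{thm2}, treating its two summands separately. The crucial structural fact is that every physical parameter is strictly positive, so $\alpha=4a\pi\lambda D+\mu>0$ and $\beta=4a^2\lambda\sqrt{\pi D/\alpha}>0$; hence both $\alpha t$ and $2\beta\sqrt{\alpha t}$ diverge to $+\infty$ as $t\to\infty$. This positivity is what guarantees the limits are reached cleanly, with no indeterminate forms.

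For the first summand $\bigl(1-\exp(-2\beta\sqrt{\alpha t}-\alpha t)\bigr)(1-\mu/\alpha)$, I would note that the exponent is strictly negative and tends to $-\infty$, so the exponential vanishes and the summand converges to $1-\mu/\alpha$. For the second summand $\frac{\sqrt{\pi}\beta\mu}{\alpha}e^{\beta^2}\bigl(\erf(\beta+\sqrt{\alpha t})-\erf(\beta)\bigr)$, the argument $\beta+\sqrt{\alpha t}\to\infty$, and invoking $\lim_{x\to\infty}\erf(x)=1$ reduces the bracketed factor to $1-\erf(\beta)$. Using the identity $\erfc(\beta)=1-\erf(\beta)$ then expresses this as $\erfc(\beta)$, so the second summand tends to $\frac{\sqrt{\pi}\beta\mu}{\alpha}e^{\beta^2}\erfc(\beta)$.

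Summing the two limits reproduces \eqref{e15} verbatim. There is no genuine obstacle in this argument; the only point requiring care is confirming the sign and direction of divergence of the exponents, which the positivity of $\alpha$ and $\beta$ settles immediately. The whole computation is an elementary termwise limit, and the probabilistic interpretation — that $\pH{\mu,\infty}$ is the probability the IM reaches some RBN before degrading — follows because letting $t\to\infty$ removes the time truncation in \eqref{e16}.
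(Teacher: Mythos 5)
Your proposal is correct and matches the paper's (implicit) argument: the corollary is obtained simply by letting $t\to\infty$ in \eqref{eq:e17}, where the exponential term vanishes and $\erf\left(\beta+\sqrt{\alpha t}\right)\to 1$ yields the $\erfc(\beta)$ factor. The termwise limit computation you give is exactly what is needed, and no further justification is required.
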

\begin{remark}
	Substituting $\mu\rightarrow 0$ in \eqref{eq:e17} (\ie the IM is non-degradable), it can be shown that $\pH{0,t}=\pH{t}$.
\end{remark}
\begin{remark}
The time-constant for the hitting process with degradable IM is 
\begin{align}
t_\mathrm{c}\approx\frac{\beta^2}{\alpha}\left(1-\sqrt{1+\frac{1}{\beta^2}}\right)^2,
\label{rem3a}
\end{align}	
which is in the order of 
	\begin{align}
		t_\mathrm{c}&=\mathcal{O}\left(\frac{1}{4\pi\lambda Da}\min\left(\frac{4\pi\lambda Da}{4\pi\lambda Da+\mu},\frac{1}{16a^3\lambda}\right)\right).\label{rem3}
	\end{align}
\end{remark}
\begin{proof}
	See Appendix \ref{app:C}.
\end{proof}
\begin{figure}
	\centering
	\includegraphics[width=0.7\linewidth]{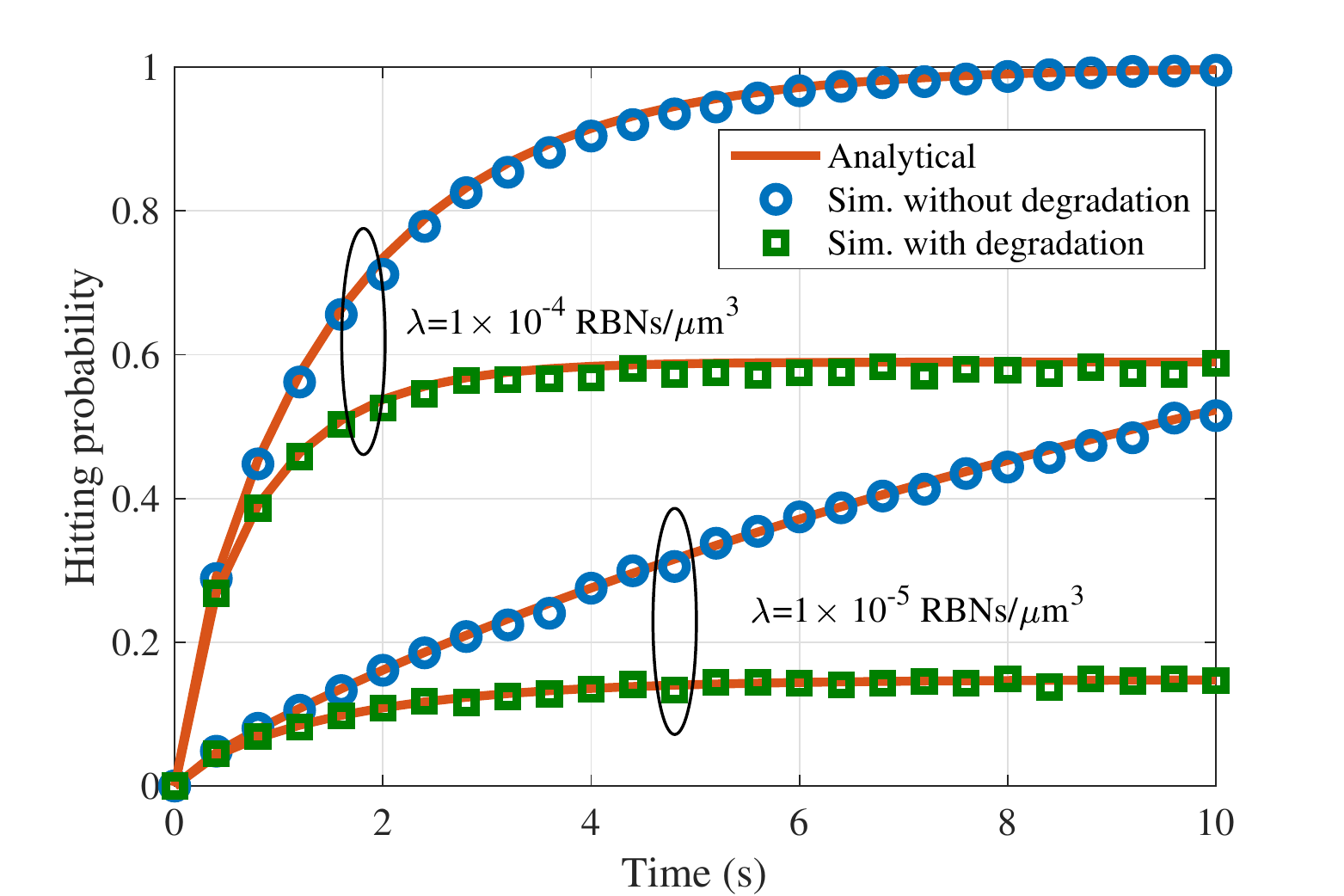}
	\caption{The hitting probability versus time for different values of RBN density and degradation  constant. Here, $a=5\mu$m and $D=100\mu$m$^2$/s.\vspace{-.3in}}
	\label{fig:f1}
\end{figure}
\begin{figure}
	\centering
	\includegraphics[width=0.9\linewidth]{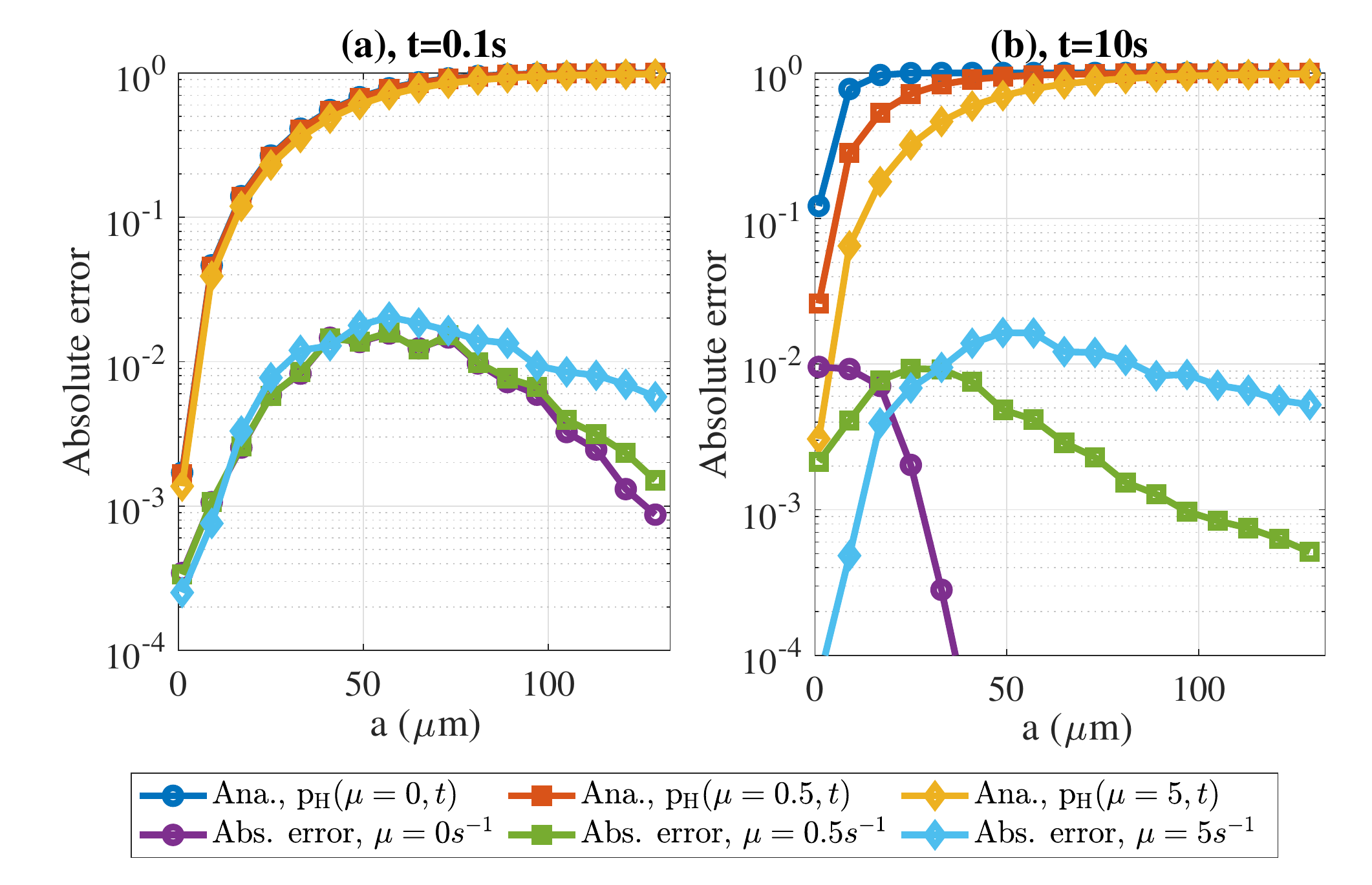}
	\caption{\color{black}Absolute approximation error versus $a$ for different values of $ t $ and $ \mu $. Here, step size $\Delta t=10^{-4}s,\  D=100\mu \text{m}^2/s,\ \lambda=1\times 10^{-5}\text{RBNs/}\mu \text{m}^3$.\vspace{-.2in}}
	\label{acc1}
\end{figure}
\begin{figure}
	\centering	
	\includegraphics[width=0.7\linewidth]{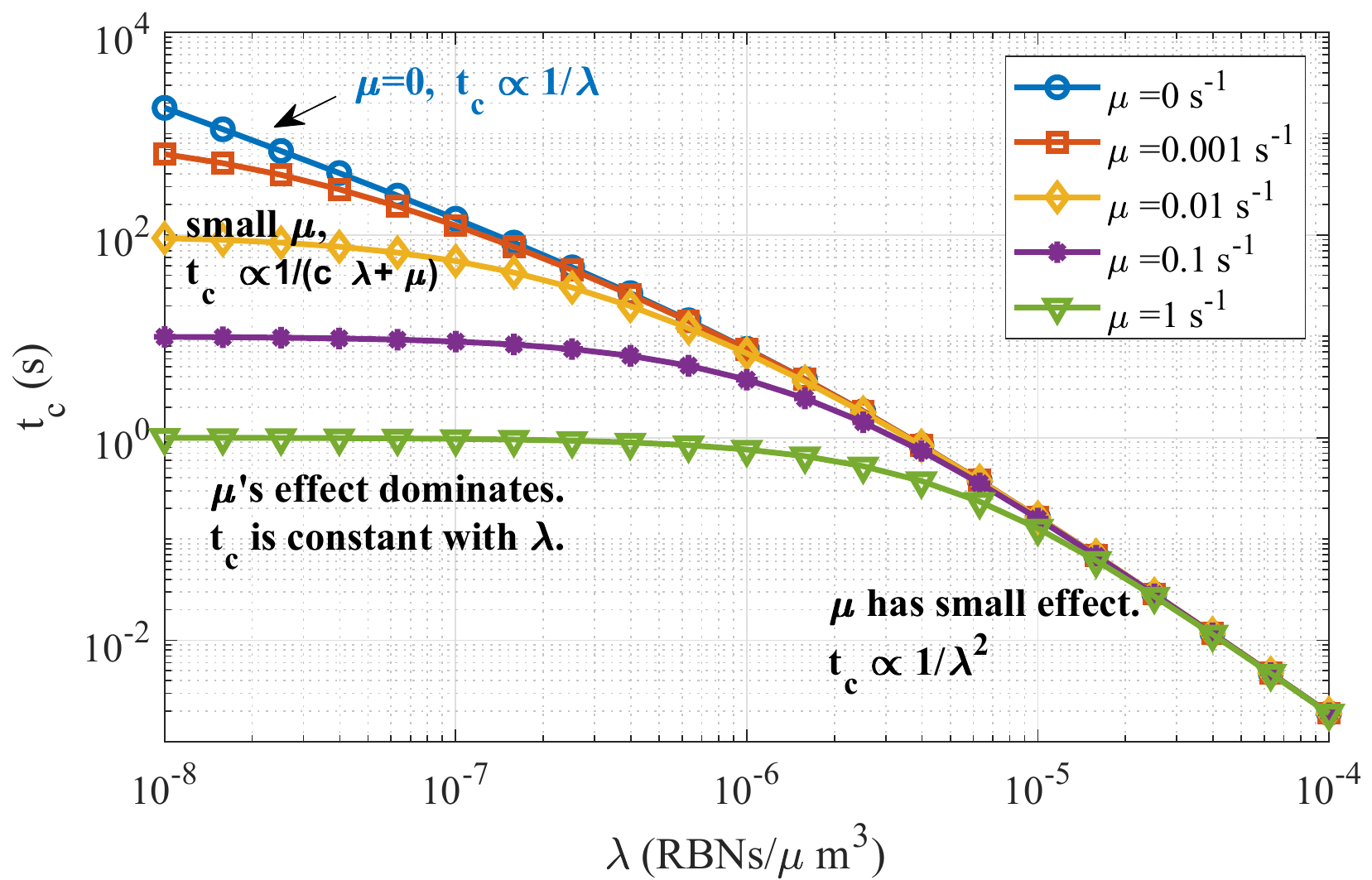}
	\caption{Variation of time constant $\tc$ with $\lambda$ for different values of $\mu$ showing the impact of parameters in various regions. Here, $D=100\mu \text{m}^2/s$, $a=40\mu \text{m}$.}
	\label{tc}
	\vspace{-.15in}
\end{figure}

Fig. \ref{fig:f1} validates the analytical results for hitting probability given in \eqref{e11} and \eqref{eq:e17}, using particle-based simulation. The simulation time step $\Delta t$ for particle-based simulation is $10^{-4}$s. 
The simulation is performed over $ 10^4 $ iterations. In each iteration, one IM is generated, and its movement is recorded. Note that, this is equivalent to a single step simulation with $ 10^4 $ IMs. 
For non-degradable IM, the hitting probability increases with time to eventually reach 1.  
For an MCvDS with degradable IM, the hitting probability is less than 1 even after a long time. This is because  IMs may degrade before hitting any RBNs. Although \eqref{e11} and \eqref{eq:e17} are approximate results, they match closely with the exact value.

Fig. \ref{acc1} shows the variation of the absolute approximation error (\ie $ | $analytical result$-$simulation result$ | $) with $a$ for different values of $ t $ and $ \mu $. It can be seen that  the error is very small, which shows the approximation is very accurate for a wide range of parameters.
	For improving the accuracy further, higher-order cumulants can be considered in \eqref{e11} and \eqref{eq:e17}.
	
		 Fig. \ref{tc} shows the variation of time-constant with respect to $ \lambda $ for different values of $ \mu $ using \eqref{rem1a} and \eqref{rem3a}. 
		 When $\mu=0$, at low values of $\lambda$,  $ \tc $ varies as $ 1/\lambda $ which is the first term in \eqref{rem1}. When $ \lambda $ is high, the second term of \eqref{rem1} is dominant and  $\tc$ decreases as $ 1/\lambda^2 $. This shows that at higher values, $\lambda$ has a larger impact on the hitting process.  
 For non-zero $\mu$, $\tc$ varies as $ 1/(c\lambda+\mu),\ c=4\pi D a  $ for low values of $\lambda$ according to \eqref{rem3}. This implies that at moderate to high values of $\mu$, $\lambda$ has less or no effect on $\tc$ and is mainly determined by the value of $\mu$. We can see that at $\mu=1s^{-1}$,  $ \tc $ is constant with $\lambda$ in this region.  However, when RBN density $ \lambda $ is high, the second term of \eqref{rem3} dominates. In this region,  $\tc$  varies as $ 1/\lambda^2 $, which means that $\lambda$ has a larger impact on the hitting process. However, $\mu$ has a very small impact on the hitting process.
\section{Event Detection Probability}
In this section, we derive the event detection probability $\EDP$, which is the probability that at least one IM out of $N$ IMs hits any of the RBNs within time $t$. Let $ z(t) $ be the number of IMs hitting any of the RBN. 
If we treat the event of hitting of an IM on any of the RBNs up to time $t$ as the success event (with probability $\pH{\mu,t}$),
then, $z(t)\sim \mathsf{Binom} \left(N,\pH{\mu,t}\right)$,  assuming the hitting probability of each IM is independent to each other.
This means that the probability that RBNs absorb more than $\eta$ IMs within time $t$ is 
\begin{align*}
	\pro{z(t)\geq\eta}=1-\sum_{k=0}^{\eta-1}\binom{N}{k}\pH{\mu,t}^k(1-\pH{\mu,t})^{N-k}.
\end{align*}
Hence, the event detection probability is given as $$\EDP=\prob{z(t)\geq 1}=1-{\left(1-\pH{\mu,t}\right)}^N.$$
Therefore, to ensure that the event is detected with probability $\EDP$, the transmitter should emit $N= \log(1-\EDP)/\log(1-\pH{\mu,t})$ number of IMs.
{\color{black} For applications requiring a central node, it has to be transparent (passive) to IMs to maintain the model's validity. Otherwise, the number of molecules received will be less than $ z(t)$. Combining this effect with the possibility of further loss in communication between the RBNs and the central node, {\color{black} $\EDP$ is an upper bound  on the actual detection probability in such applications.}} 
\begin{figure}
	\centering
	\includegraphics[width=0.7\linewidth]{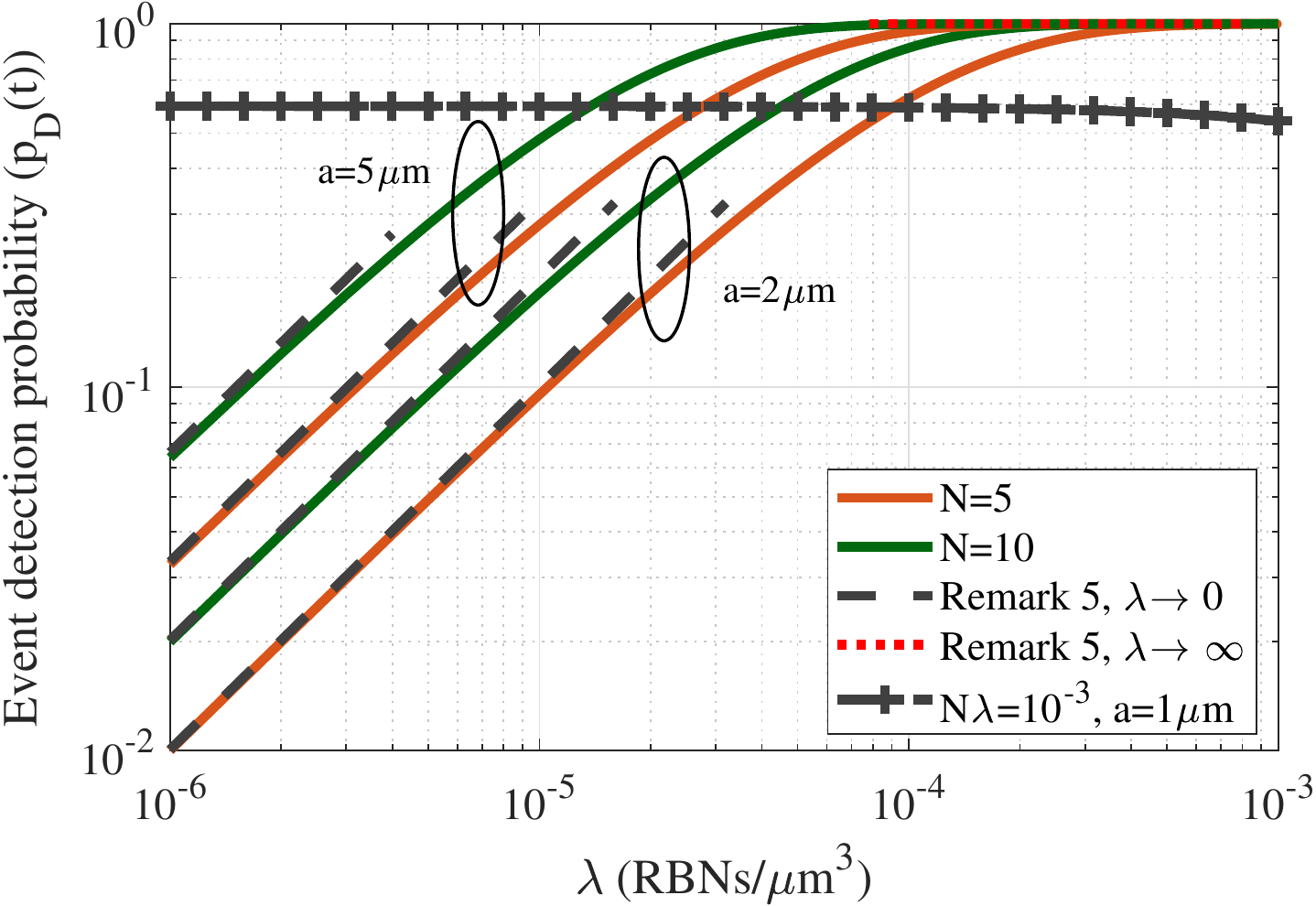}
	\caption{Event detection probability $\EDP$ versus RBN density $\lambda$. Here, $\mu=1$s$ ^{-1} $, $t=1$s and $D=100\mu$m$^2$/s.\vspace{-.5in}}
	\label{fig:f4n}
\end{figure}
\begin{remark}\label{rem:zeroDen}
If IMs are non-degradable, $\EDP=1-\exp\left(-4\pi N\lambda Dat-8a^2N\lambda\sqrt{\pi D t}\right)$. In other words, scaling either the RBN density $\lambda$ or the number of emitted molecules $N$ by the same factor has the same effect on the event detection probability. 
For example, doubling $\lambda$ or doubling $N$ has the same effect on $\EDP$.
\end{remark}

\begin{remark}\label{rem:infDen}
If IMs are degradable, a similar effect can be found under certain circumstances. When $\lambda\to 0$, the asymptotic  $\EDP$ 
 is linearly dependent on $N$ and $\lambda$ i.e. $\EDP=$ 
\begin{align*}
4a\pi N\lambda\left[(D/\mu){\left(1-e^{-\mu t}\right)}+a\sqrt{({D}/{\mu})}\ \erf\left(\sqrt{\mu t}\right)\right]+\mathcal{O}\!\left(\lambda^2\right)\!,
\end{align*}
which shows that scaling $\lambda$ or $N$ has the same effect on $\EDP$. 
However,  for large $\lambda$, scaling $\lambda$ or $N$ does not have the same effect. In particular, as $\lambda\rightarrow \infty$, 
	\begin{align*}
		\EDP=1-\left({0.0095\mu}/{(aD)}\right)^N\lambda^{-N}+\mathcal{O}\left({\lambda^{-2N}}\right).
	\end{align*} 
Therefore, if we fix $N\lambda=c$, $\EDP=-N\ln\left(N\right)-N\ln\left({0.0095\mu}/{(aDc)}\right)+\mathcal{O}\left(N^2\right)$.
	\end{remark}
Fig. \ref{fig:f4n} shows the variation of $\EDP$ with $\lambda$.
The limiting behavior for $\lambda\rightarrow 0$ and  $\lambda\rightarrow \infty$ is also shown as discussed in Remark \ref{rem:infDen}. 
 Fig. \ref{fig:f4n} also shows the variation of $\EDP$ with $\lambda$, when $ N\lambda=c $, where $ c $ is a constant. Fig. \ref{fig:f4n} verifies that,  $\EDP$ remains constant when $ \lambda $ is small and $\EDP$ falls (small dip at the end of curve) when $ \lambda $ is high as $N$ is very low. 
\appendices
\section{Derivation of Lemma 1}\label{app:A}
Note that $\vw$ is a function of $\wt$. Its average volume is 
\begin{small}
\begin{align}
\vt&=\expects{\wt}{\int_{\rthree} \ind{\y\in\Xi}\dd\y}\nonumber\\
&=\expects{\wt}{\int_{\rthree\setminus \ball{0}{a}} \ind{\y\in\wt\oplus\ball{0}{a}}\dd\y}\nonumber\\
&=\expects{\wt}{\int_{\rthree\setminus \ball{0}{a}} \ind{\wt\cap\ball{\y}{a}\ne \phi}\dd\y}\nonumber\\
&=\int_{\rthree\setminus \ball{0}{a}} \prob{\wt\cap\ball{\y}{a}\ne \phi}\dd\y.\label{eq:meanKt}
\end{align}
\end{small}
$\prob{\wt\cap\ball{\y}{a}\ne \phi}$ is the probability that the  path of the IM visits the $a$-neighborhood of the point $\y$ at least once  during time $t$. This  equals the probability that a molecule undergoing Brownian motion with initial point $\mathbf{0}$ reaches the fully-absorbing sphere of radius $a$ around the $\y$ point within time $t$ and is given by \cite{Yilmaz2014},
\begin{align}
\prob{\wt\cap\ball{\y}{a}\ne \phi}&=\frac{a}{\norm{\y}}\erfc\left(\frac{\norm{\y}-a}{\sqrt{4Dt}}\right)\label{e9}.
\end{align}
Substituting \eqref{e9} in \eqref{eq:meanKt}, then using  $ \erfc(.) $ function definition, and finally changing order of integration gives Lemma 1. The proof and the final expression is similar to  \cite[eq. 11]{Berezhkovskii1989}.
\section{Proof of Remark 1}\label{app:B}

The  time $\tc$  for which the hitting probability $ \pH{t} $ attains $k=1-1/e $ times the maximum value (which is 1) is given as
	\begin{align}
	\pH{\tc}= 1-\exp\left(-4\pi\lambda Da\tc-8a^2\lambda\sqrt{\pi D \tc}\right)=1-e^{-1}.\label{ebb1}
	\end{align}
Solving \eqref{ebb1} for $ t_c $  gives \eqref{rem1a}.
Now,
\begin{align*}
e^{-4\pi\lambda Da\tc}&\ge e^{-4\pi\lambda Da\tc-8a^2\lambda\sqrt{\pi D \tc}}=1-k.\\
\tc&\le \frac{\ln\left(1/(1-k)\right)}{4\pi\lambda Da}= \frac{1}{4\pi\lambda Da}.
\end{align*}
Similarly,
\begin{align*}
e^{-8a^2\lambda\sqrt{\pi D \tc}}&\ge e^{-4\pi\lambda Da\tc-8a^2\lambda\sqrt{\pi D \tc}}=1-k.\nonumber\\
\tc &\le \left(\frac{\ln\left(1/(1-k)\right)}{8a^2\lambda\sqrt{\pi D}}\right)^2=\frac{1}{64a^4\lambda^2{\pi D}}.
\end{align*}
Combining the two, we get \eqref{rem1}.
\section{Proof of Remark 3}
\label{app:C}
Applying the inequality $ \erf(x)\geq 1-\exp\left(-x^2\right) $ \cite{erfin} in \eqref{eq:e17} gives
\begin{align}
	\pH{\mu,\tc}&\approx\left(1-\exp\left(-2\beta\sqrt{\alpha \tc}-\alpha \tc\right)\right)\times\left(1-\frac{\mu}{\alpha}\right)+\frac{\sqrt{\pi}\beta \mu}{\alpha}\left(1-\exp\left(-2\beta\sqrt{\alpha \tc}-\alpha \tc\right)\right).
\end{align}
Note that the maximum value attainable by $\pH{\mu,t}$ is  $p_\infty= 1+\frac{\mu}{\alpha}\left(\sqrt{\pi}\beta-1\right) $. 
Now, proceeding the derivation as similar to Appendix \ref{app:B}, with $ k=p_\infty\left(1-e^{-1}\right)$ gives \eqref{rem3a} and \eqref{rem3}.
\bibliographystyle{IEEEtran}
\bibliography{letter_ref}

\end{document}